\documentclass[a4paper,twoside,reqno]{amsart}

\usepackage[pagewise]{lineno}
\usepackage[utf8]{inputenc}


\usepackage{authblk}
\usepackage{hyperref}
\hypersetup{urlcolor=blue, citecolor=red}
\usepackage{amsmath,amsthm,amssymb,xcolor,bbm,mathrsfs}
\usepackage[english]{babel}
\usepackage{fancyhdr}\setlength{\headheight}{23pt}
\pagestyle{fancy}
\fancyhf{}
\fancyhead[CE]{Hsin-Lun Li}
\fancyhead[CO]{Applications of the Deffuant model in money exchange}
\fancyfoot[CE,CO]{\thepage}

\makeatletter
\@namedef{subjclassname@1991}{2020 Mathematics Subject Classification}
\makeatother
\subjclass{}
\keywords{Equal wealth, social mobility, money transfer, the rich and poor, conservative system, social network.}

\title{Application of the Deffuant model\\ in money exchange}
\author{Hsin-Lun Li}

\date{}
\email{hsinlunl@asu.edu}

 

\theoremstyle{definition}
\newtheorem{theorem}{Theorem}
\newtheorem{definition}[theorem]{Definition}
\newtheorem{lemma}[theorem]{Lemma}

\begin{document}

\allowdisplaybreaks

\thispagestyle{firstpage}
\maketitle
\begin{center}
    Hsin-Lun Li
    \centerline{$^1$National Sun Yat-sen University, Kaohsiung 804, Taiwan}
\end{center}
\medskip
\begin{abstract}
    A money transfer involves a buyer and a seller. A buyer buys goods or services from a seller. The money the buyer decreases is the same as that the seller increases. At each time step, a pair of socially connected agents are selected and transact in agreed money. We evolve the Deffuant model to a money exchange system and study circumstances under which asymptotic stability holds, or equal wealth can be achieved. 
\end{abstract}
\section{Introduction}
The Deffuant model is one of the popular models in opinion dynamics. The original Deffuant model consists of a finite number of agents whose opinion is a number in $[0,1].$ Based on a fixed social relationship among all agents, two socially connected agents are selected at each time step and interact if and only if their opinion distance does not exceed some confidence threshold $\epsilon$. The interacting mechanism goes as follows:
\begin{align}\begin{aligned}\label{Deffuant model}
    &x_i(t+1)=x_i(t)+\mu(x_j(t)-x_i(t))\mathbbm{1}\{|x_i(t)-x_j(t)|\leq \epsilon\}\\
    &x_j(t+1)=x_j(t)+\mu(x_i(t)-x_j(t))\mathbbm{1}\{|x_i(t)-x_j(t)|\leq \epsilon\}
    \end{aligned}
\end{align}
for $x_i(t)\in [0,1]$ the opinion of agent $i$ at time $t$ and $\mu\in (0,1/2]$ the convergence parameter. Namely, agents $i$ and $j$ move equally toward each other. The Hegselmann-Krause model is another popular model in opinion dynamics. Without a social relationship, an agent updates its opinion by taking the average opinion of its opinion neighbors, those whose opinion differs by at most some confidence threshold. The authors in \cite{lanchier2020probability} derive a nontrivial lower bound for the probability of consensus under the Deffuant model. The author in \cite{mHK2}, the sequel to \cite{mHK}, introduces a variant of the Hegselmann-Krause model and argues that it covers not only the Hegselmann-Krause model but also the Deffuant model. With similar methods in \cite{lanchier2020probability}, the authors in \cite{lanchier2022consensus} get a nontrivial lower bound for the probability of consensus under the Hegselmann-Krause model.

A money transfer involves a buyer and a seller. The buyer pays for goods or services offered by the seller. Therefore, the buyer decreases the same money as the seller increases. Considering a finite set of agents, say two agents are \emph{socially connected} if there is a social relationship, such as trust, between them. Say $m_i$ is the money of agent $i$. $m_i<0$ if agent $i$ is in debt. It is realistic to set a lower bound for $m_i$ since no one is allowed to borrow money endlessly. Also, we assume the money transfer system is conservative. Namely, influx equals efflux, therefore the sum of all agents' money constant over time. Before depicting the money transfer system evolved from the Deffuant model, we introduce the following terms. Let $[n]=\{1,2,\ldots,n\}$ denote the collection of all agents. 

\begin{definition}{\rm
A \emph{social graph} at time $t$, $G(t)=([n], E(t))$, is an undirected graph with vertex set and edge set,
$$[n]\ \hbox{and}\ E(t)=\{(i,j)\in [n]^2: i\neq j\ \hbox{and}\  \hbox{vertices $i$ and $j$ are socially connected}\}.$$
A \emph{social graph for update} at time $t$, $\Tilde{G}(t)=([n],\Tilde{E}(t))$, is a subgraph of the social graph at time $t$ for money update.}
\end{definition}

\begin{definition}{\rm
A graph is \emph{$\delta$-trivial} if any two vertices in the graph are at a distance of at most $\delta$ apart. }
\end{definition}

The Deffuant model evolves into a money transfer system with the following setting: Let $m_i(0),\ i\in[n]$ and $\mu(t)$, $t\geq 0$ be independent continuous real-valued random variables. Agents $i$ and $j$ are selected at time $t$ and interact if and only if they are socially connected. The interaction mechanism is as follows.
\begin{align}\begin{aligned}\label{money transfer system}
   & m_i(t+1)=m_i(t)+\mu(t)\big(m_j(t)-m_i(t)\big)\mathbbm{1}\{(i,j)\in E(t)\}  \\
   &  m_j(t+1)=m_j(t)+\mu(t)\big(m_i(t)-m_j(t)\big)\mathbbm{1}\{(i,j)\in E(t)\}
   \end{aligned}
\end{align}
where
$$\begin{array}{l}
   \displaystyle m_i(t)\geq -d_i\ \hbox{for}\ d_i>0\ \hbox{constant and for all}\ t\geq 0,\vspace{2pt}\\
   \displaystyle \sum_{i\in [n]}m_i=C\ \hbox{for}\ C\ \hbox{constant}.
\end{array}$$
 Observe that $d_i$ symbolizes the credibility of agent $i$ and $\mu(t)$ corresponds to the behavior that the money two agents agree to transact given that no one is out of credibility. Say agent $i$ is richer than agent $j$ if $m_i\geq m_j$, and poorer than agent $j$ if $m_i\leq m_j$. Given $m_i(t)\leq m_j(t)$, $m_i(t+1)\leq m_j(t+1)$ if $\mu(t)\leq 1/2$, and $m_i(t+1)\geq m_j(t+1)$ if $\mu(t)>1/2$. The former indicates that the richer agent remains in its richer status, whereas the latter indicates the richer agent becomes poorer, after the transaction, whether it is a buyer or a seller at time $t$. The money transfer system evolved from the Deffuant model differs from the money transfer system in \cite{dragulescu2000statistical, kiyotaki1993search, molico2006distribution} regarding the conservation of money. 

 Let $U_t=\{(i,j):\hbox{agent $i$ and $j$ are selected at time $t$}\ \hbox{and}\ \mu(t)\neq 0\}$,\ $t\geq0$ be independent and identically distributed random variables with a support $$S\subset\big\{\{(i,j)\}:i,j\in[n]\ \hbox{and}\ i\neq j\big\}\ \hbox{and}\ \Tilde{E}(t)=U_t\cap E(t).$$ In other words, $U_t$ indicates a possible pair of candidates for money transfer. Assume that $(\Omega,\mathscr{F},P)$ is a probability space for $\mathscr{F}\subset\mathscr{P}(\Omega)$ a $\sigma$-algebra and $P$ a probability measure. Denote $E(G)$ as the edge set of graph $G$. Given a pair of agents in transaction, we consider the following conditions in \eqref{money transfer system}:
 \begin{itemize}
     \item $\mu\in(0,1)$,
     \item $\mu>1$ and 
     \item $\mu<0$ 
 \end{itemize} 
corresponding to the richer agent whose money is
\begin{itemize}
    \item at least the same as the current money of the poorer agent after the transaction,
    \item at most the same as the current money of the poorer agent after the transaction, and
    \item at least the same as its current money after the transaction.
\end{itemize}

\section{Main results}

It turns out that equal wealth can be achieved if all agents are willing and possible to transact and the social graph is connected infinitely many times. Namely, all agents eventually have the average money of the total money in the system.  
\begin{theorem}\label{thm: equal wealth}
Assume that $\sup_{t\geq 0}|\mu(t)-1/2|<1/2$, the social graph connected infinitely many times and $\bigcup_{a\in S}a\supset \binom{[n]}{2}$. Then, equal wealth can be achieved eventually.
\end{theorem}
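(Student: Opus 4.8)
The plan is to track two monotone functionals of the wealth vector $m(t)=(m_1(t),\dots,m_n(t))$ — its spread $\diam m(t):=\max_i m_i(t)-\min_i m_i(t)$ and its centered second moment $V(t):=\sum_{i\in[n]}\bigl(m_i(t)-C/n\bigr)^2$ — and to use the connectivity and full-support hypotheses to drive the spread to $0$. First I would unpack $\sup_{t\ge0}|\mu(t)-1/2|<1/2$: it says $\mu(t)\in[\delta,1-\delta]$ with $\delta:=1/2-\sup_t|\mu(t)-1/2|>0$, so each transaction replaces $(m_i(t),m_j(t))$ by two strict convex combinations of $m_i(t),m_j(t)$. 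Hence $\max_i m_i(t)$ is non-increasing and $\min_i m_i(t)$ is non-decreasing, so $\diam m(t)$ decreases to a limit $D_\infty\ge0$ and the process stays bounded. A short computation (expand the square) shows that when the selected pair $(i,j)$ is socially connected at time $t$ one has $V(t+1)=V(t)-2\mu(t)\bigl(1-\mu(t)\bigr)\bigl(m_i(t)-m_j(t)\bigr)^2$, and $V(t+1)=V(t)$ otherwise; since $2\mu(t)(1-\mu(t))\ge 2\delta(1-\delta)=:c>0$, the sequence $V(t)$ is non-increasing and nonnegative, so its total decrease is at most $V(0)<\infty$.

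The crux is to prove $D_\infty=0$ almost surely, which I would do by contradiction. Fix a rational $q>0$ and suppose $P(D_\infty\ge q)>0$; on that event $\diam m(t)\ge q$ for every $t$, so — writing the coordinates of $m(t)$ in increasing order, whose $n-1$ successive gaps sum to $\diam m(t)$ — the largest gap $\gamma(t)$ is at least $q/(n-1)=:\gamma_0>0$, and it splits $[n]$ into nonempty sets $A_t\sqcup B_t$ with $\min_{j\in B_t}m_j(t)-\max_{i\in A_t}m_i(t)\ge\gamma_0$. Whenever $G(t)$ is connected — which happens for infinitely many $t$ by hypothesis — $G(t)$ has an edge $e(t)\in E(t)$ joining $A_t$ to $B_t$; because $U_t$ is independent of the history up to $t$ and, by $\bigcup_{a\in S}a\supset\binom{[n]}{2}$, assigns probability at least $p_0:=\min_{\{i,j\}}P(U_t=\{(i,j)\})>0$ to every pair, the conditional probability that $U_t=\{e(t)\}$ is at least $p_0$, and on that event the transaction fires across the gap (since $e(t)\in E(t)$), forcing $V(t+1)\le V(t)-c\gamma_0^2$. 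Summing these conditional probabilities over the infinitely many connectivity times yields $+\infty$, so the conditional Borel--Cantelli (Lévy) lemma gives, almost surely on $\{D_\infty\ge q\}$, infinitely many across-the-gap transactions, each lowering $V$ by at least $c\gamma_0^2$ — impossible, since $V$ has finite total decrease. Thus $P(D_\infty\ge q)=0$ for all rational $q>0$, i.e. $D_\infty=0$ a.s.

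To finish, $D_\infty=0$ means $\max_i m_i(t)-\min_i m_i(t)\to0$; since the average of the $m_i(t)$ is $C/n$, we have $\min_i m_i(t)\le C/n\le\max_i m_i(t)$ for all $t$, so both extremes converge to $C/n$ and hence $m_i(t)\to C/n$ for every $i$, which is the claimed equal wealth. The step I expect to be the real obstacle is the measure-theoretic bookkeeping in the contradiction argument: one must make sure the partition $A_t\sqcup B_t$ and the crossing edge $e(t)$ are measurable with respect to the information $\mathscr{F}_t$ available at time $t$, that $U_t$ is genuinely independent of $\mathscr{F}_t$ (so the bound $P(U_t=\{e(t)\}\mid\mathscr{F}_t)\ge p_0$ is legitimate), and that it is the adapted event $\{\diam m(t)\ge q\}$, not the non-adapted $\{D_\infty\ge q\}$, that one feeds into the conditional Borel--Cantelli lemma; the monotonicity facts and the identity for $V$ are routine once this is set up.
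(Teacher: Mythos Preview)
Your argument is correct and shares the two core ingredients with the paper's proof---a monotone quadratic Lyapunov function (your $V$ is the paper's $Z$ divided by $2n$) and a Borel--Cantelli argument producing infinitely many transactions that each lower the Lyapunov function by a fixed amount---but the organization is genuinely different. The paper first isolates a lemma (Lemma~\ref{lemma:asymptotic stability}): under $\sup_t|\mu(t)-1/2|<1/2$, every transacting pair is eventually within $\delta$, for each $\delta>0$; its proof is essentially your contradiction step applied to the single edge of $\tilde G(t)$. Then, by pigeonhole on the finitely many graphs on $[n]$, the paper extracts one \emph{fixed} connected graph $H$ with $G(t_k)=H$ infinitely often, and for each fixed edge $(i,j)\in E(H)$ applies the \emph{independent} second Borel--Cantelli lemma to the events $\{U_{t_k}=\{(i,j)\}\}$ to get $(i,j)\in\tilde E(s_\ell)$ along a subsequence; combined with the lemma this gives $|m_i(s_\ell)-m_j(s_\ell)|\to0$, and connectivity of $H$ propagates equality. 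Your route skips the pigeonhole and the edge-by-edge bookkeeping by tracking the global spread directly, at the cost of needing the \emph{conditional} (L\'evy) Borel--Cantelli lemma and the $\mathscr F_t$-measurability checks you rightly flag as the delicate point. What you gain is a cleaner endgame: once $\diam m(t)\to0$ you immediately get $m_i(t)\to C/n$ for every $i$, whereas the paper's conclusion that neighbours in $H$ agree along a subsequence still needs a small extra step to upgrade to convergence of each $m_i(t)$.
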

It follows that equal wealth can be achieved for a constant connected social graph if all socially connected agents are willing and possible to transact. Let $m_{(i)}$ be the $i$th smallest among $m_1,m_2,\ldots,m_n$. Theorem~\ref{thm: order} shows conditions under which $m_{(i)}$ is asymptotically stable. 

\begin{theorem}\label{thm: order}
 Assume that $\inf_{t\geq 0}|\mu(t)-1/2|\geq 1/2$, $\bigcup_{a\in S}a\supset \binom{[n]}{2}$ and the social graph complete infinitely many times. Then, $m_{(i)}$ is asymptotically stable for all $i\in [n]$ almost surely.
\end{theorem}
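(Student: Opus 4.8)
The plan is to produce, for each $k\in[n]$, a bounded monotone functional of the wealth configuration, conclude its almost sure convergence, and recover each order statistic by differencing; throughout I read ``$m_{(i)}$ asymptotically stable'' as ``$m_{(i)}(t)$ converges almost surely as $t\to\infty$''.

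First I would record what one step of \eqref{money transfer system} does to the unordered multiset $\{m_1(t),\dots,m_n(t)\}$. The assumption $\inf_{t\ge0}|\mu(t)-1/2|\ge1/2$ together with the continuity of $\mu(t)$ forces $\mu(t)\in(-\infty,0)\cup(1,\infty)$ for every $t$, almost surely. A direct computation then shows: if the pair selected at time $t$ lies in $E(t)$ and has wealths $p\le q$, then afterwards the two agents carry values $p'$ and $q'$ with $p'\le p\le q\le q'$ and $p'+q'=p+q$ (when $\mu(t)>1$ the poorer agent overtakes the richer one, but this reordering is irrelevant at the level of the multiset); and if the selected pair is not in $E(t)$, nothing changes. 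Thus every step either is a no-op or replaces a pair $\{p,q\}$ of the multiset by a mean-preserving spread $\{p',q'\}$ of it.

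Next I would turn this into monotonicity. For $1\le k\le n$ put $f_k(t)=\sum_{i=1}^{k}m_{(i)}(t)$, the sum of the $k$ smallest wealths, and use the representation $f_k(t)=\max_{\theta\in\mathbb R}\bigl[\,k\theta-\sum_{i\in[n]}(\theta-m_i(t))^{+}\,\bigr]$. For each fixed $\theta$ the map $x\mapsto(\theta-x)^{+}$ is convex, so replacing a pair by a mean-preserving spread can only increase $\sum_{i}(\theta-m_i(t))^{+}$, hence can only decrease $f_k$; a no-op leaves $f_k$ unchanged. So on the almost sure event of the previous paragraph $f_k(t)$ is nonincreasing in $t$. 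It is also bounded: $f_k(t)\ge\sum_{i\in[n]}\min(m_i(t),0)\ge-\sum_{i\in[n]}d_i$ by the credibility constraint, while $f_k(t)\le f_n(t)=\sum_i m_i(t)=C$. A bounded monotone sequence converges, so $f_k(t)\to f_k(\infty)$ almost surely, for every $k$; with the convention $f_0\equiv0$ this yields $m_{(k)}(t)=f_k(t)-f_{k-1}(t)\to f_k(\infty)-f_{k-1}(\infty)$ almost surely, for all $k\in[n]$, which is the assertion.

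The only part that needs real care is the first step: one must verify that both regimes $\mu(t)>1$ and $\mu(t)<0$ genuinely produce a mean-preserving spread of the selected pair, and one must fix the convention for an update that would push some $m_i$ below its bound $-d_i$ — treating such an update as a no-op, which is harmless since no-ops preserve every $f_k$. I would also note that the hypotheses $\bigcup_{a\in S}a\supset\binom{[n]}{2}$ and ``the social graph complete infinitely many times'' are not needed for the convergence of the $m_{(k)}$ claimed here; they would be the relevant assumptions for a finer description of the limiting configuration, and they appear in parallel with the hypotheses of Theorem~\ref{thm: equal wealth}.
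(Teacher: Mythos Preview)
Your argument is correct and follows a genuinely different route from the paper's. You exploit the observation that under $\mu(t)\in(-\infty,0)\cup(1,\infty)$ every nontrivial update is a mean-preserving spread of the transacting pair at the level of the multiset, and then use convexity of $x\mapsto(\theta-x)^{+}$ together with the variational formula for $f_k$ to show that each partial sum $f_k(t)=\sum_{i\le k}m_{(i)}(t)$ is nonincreasing; bounded below by $-\sum_i d_i$, it converges, and differencing gives each $m_{(k)}$. The paper instead runs a contradiction through Borel--Cantelli: it tracks the maximal gap $A_t$ among agents strictly above their credit limits, notes that $A_t$ is nondecreasing under the hypothesis on $\mu$, and argues that if $|\mu(t)-1/2|$ stayed strictly above $1/2$ along a subsequence, then the support condition $\bigcup_{a\in S}a\supset\binom{[n]}{2}$ and the completeness of the social graph would force the current extremes to transact infinitely often, driving $A_t\to\infty$ and violating the credibility bounds.

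Your approach is more elementary and, as you correctly point out, does not use either $\bigcup_{a\in S}a\supset\binom{[n]}{2}$ or completeness of the social graph; it yields convergence of the order statistics for \emph{any} selection mechanism and any social graph, so in this sense it proves more. The paper's argument, by contrast, spends exactly those hypotheses to guarantee that the extreme pair meets infinitely often. One small slip to fix: the inequality $f_k(t)\le f_n(t)=C$ fails when all $m_i$ are negative, but you do not need it --- a nonincreasing sequence bounded below already converges in $\mathbb{R}$, and $f_k(t)\le f_k(0)$ is automatic from monotonicity.
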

$\inf_{t\geq 0}|\mu(t)-1/2|\geq 1/2$ indicates that the richer agent can be either much richer or poorer than the poor agent after a transaction.

\section{The model}
The crucial part of proving Theorem~\ref{thm: equal wealth} is to find a nonnegative nonincreasing function and construct an inequality involving the current money and the updated money. Then, we are able to show circumstances under which asymptotic stability holds in the money transfer system.
\begin{lemma}\label{key}
    Let $Z(t)=\sum_{i,j\in [n]}\big(m_i(t)-m_j(t)\big)^2$. Then, $Z(t)$ is monotone with respect to $t$. In particular,
    $$Z(t)-Z(t+1)= 2n(\frac{1}{\mu(t)}-1)\mathbbm{1}\{\mu(t)\neq 0\}\sum_{i\in [n]}\big(m_i(t)-m_i(t+1)\big)^2.$$
\end{lemma}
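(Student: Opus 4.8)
The plan is to reduce $Z(t)$ to a constant multiple of $\sum_{i\in[n]}m_i(t)^2$ and then exploit that at most two coordinates move at each step. First I would record the elementary identity
$$Z(t)=\sum_{i,j\in[n]}\big(m_i(t)-m_j(t)\big)^2=2n\sum_{i\in[n]}m_i(t)^2-2\Big(\sum_{i\in[n]}m_i(t)\Big)^2=2n\sum_{i\in[n]}m_i(t)^2-2C^2,$$
where the last equality uses conservation of total money, $\sum_{i\in[n]}m_i(t)=C$. Consequently
$$Z(t)-Z(t+1)=2n\sum_{i\in[n]}\big(m_i(t)^2-m_i(t+1)^2\big),$$
and since only the pair $(i,j)$ selected at time $t$ can change its money — and only on the event $\{(i,j)\in E(t),\ \mu(t)\neq0\}$ — this sum collapses to $2n\big(m_i(t)^2+m_j(t)^2-m_i(t+1)^2-m_j(t+1)^2\big)$ on that event and to $0$ otherwise.

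Next I would compute the two-coordinate increment. Set $a=m_i(t)$, $b=m_j(t)$, $\mu=\mu(t)$. From \eqref{money transfer system} one has $m_i(t+1)+m_j(t+1)=a+b$ and $m_i(t+1)-m_j(t+1)=(1-2\mu)(a-b)$, so applying $2(u^2+v^2)=(u+v)^2+(u-v)^2$ to both the old and the new pair gives
$$m_i(t)^2+m_j(t)^2-m_i(t+1)^2-m_j(t+1)^2=\tfrac12\big(1-(1-2\mu)^2\big)(a-b)^2=2\mu(1-\mu)(a-b)^2,$$
hence $Z(t)-Z(t+1)=4n\,\mu(1-\mu)\big(m_i(t)-m_j(t)\big)^2$ on the interaction event. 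On the other hand, \eqref{money transfer system} directly yields $m_i(t)-m_i(t+1)=\mu(a-b)$, $m_j(t)-m_j(t+1)=-\mu(a-b)$, and $m_k(t)=m_k(t+1)$ for $k\notin\{i,j\}$, so $\sum_{k\in[n]}\big(m_k(t)-m_k(t+1)\big)^2=2\mu^2(a-b)^2$. Since $\mu\neq0$ on the event under consideration, substituting $(a-b)^2=\frac1{2\mu^2}\sum_{k\in[n]}\big(m_k(t)-m_k(t+1)\big)^2$ into the previous display produces
$$Z(t)-Z(t+1)=\frac{4n\,\mu(1-\mu)}{2\mu^2}\sum_{k\in[n]}\big(m_k(t)-m_k(t+1)\big)^2=2n\Big(\frac1{\mu(t)}-1\Big)\sum_{k\in[n]}\big(m_k(t)-m_k(t+1)\big)^2.$$
On the complementary event ($\mu(t)=0$, or the selected pair is not socially connected) both sides equal $0$, so inserting the factor $\mathbbm{1}\{\mu(t)\neq0\}$ makes the displayed identity hold for every $t$.

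Monotonicity then follows by inspecting the sign of $1/\mu(t)-1$, which is nonnegative precisely when $\mu(t)\in(0,1)$ and nonpositive when $\mu(t)\in(-\infty,0)\cup(1,\infty)$: under $\sup_{t\geq0}|\mu(t)-1/2|<1/2$ the increments $Z(t)-Z(t+1)$ are all $\geq0$, so $Z$ is nonincreasing, while under $\inf_{t\geq0}|\mu(t)-1/2|\geq1/2$ they are all $\leq0$, so $Z$ is nondecreasing; in either regime $Z(t)$ is monotone in $t$, as asserted. I do not expect a genuine obstacle here: the statement is essentially a single-step algebraic computation. The only points that call for a little care are bookkeeping the conditioning on which pair is selected and on the null events $\{\mu(t)=0\}$ and $\{(i,j)\notin E(t)\}$ so that the indicator $\mathbbm{1}\{\mu(t)\neq0\}$ lands correctly, and using the sum/difference parametrization of the two interacting values, which is what makes the cancellation transparent.
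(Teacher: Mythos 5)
Your proof is correct, and it reaches the paper's identity by a genuinely different decomposition. The paper expands $Z(t)-Z(t+1)$ pairwise: it computes the contribution of the interacting pair $(p,q)$ directly and then, for each third agent $i$, combines $(m_p-m_i)^2-(m_p^\star-m_i)^2+(m_q-m_i)^2-(m_q^\star-m_i)^2$, using the antisymmetry $m_p-m_p^\star=-(m_q-m_q^\star)$ to cancel the cross terms, and finally sums $2\big[4(\tfrac1\mu-1)+2(n-2)(\tfrac1\mu-1)\big](m_p-m_p^\star)^2=4n(\tfrac1\mu-1)(m_p-m_p^\star)^2$. You instead collapse $Z(t)=2n\sum_i m_i(t)^2-2C^2$ via conservation of total money, which reduces everything to a two-coordinate computation handled cleanly by the sum/difference parametrization $2(u^2+v^2)=(u+v)^2+(u-v)^2$; this avoids the third-agent bookkeeping entirely and makes the role of conservation explicit. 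Both are one-step algebra and yield the same constant $2n(\tfrac1{\mu(t)}-1)$ after the substitution $\sum_k(m_k(t)-m_k(t+1))^2=2\mu^2(m_i(t)-m_j(t))^2$. Your closing discussion of the sign of $1/\mu(t)-1$ in the two regimes is also a useful addition: the paper's proof establishes only the displayed identity and leaves the monotonicity claim implicit, whereas you make precise that $Z$ is nonincreasing when $\mu(t)\in(0,1)$ for all $t$ and nondecreasing when $|\mu(t)-1/2|\geq 1/2$ for all $t$, which is exactly how the lemma is used in Theorems~\ref{thm: equal wealth} and~\ref{thm: order}.
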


\begin{proof}
 Let $\Tilde{E}(t)=\{(p,q)\}$, $\mu=\mu(t)$, $m_i=m_i(t)$ and $m_i^\star=m_i(t+1)$ for all $i\in [n].$ Clearly, $Z(t)-Z(t+1)=0$ for $\mu=0$. For $\mu\neq 0$, we have
 \begin{align*}
     &(m_p-m_q)^2-(m_p^\star-m_q^\star)^2=[1-(1-2\mu)^2](m_p-m_q)^2\\
     &\hspace{2cm}=(2-2\mu)2\mu/\mu^2(m_p-m_p^\star)^2=4\frac{(1-\mu)}{\mu}(m_p-m_p^\star)^2, \\ 
     &(m_p-m_i)^2-(m_p^\star-m_i)^2=(m_p-m_p^\star)^2+2(m_p-m_p^\star)(m_p^\star-m_i),\\
     &(m_q-m_i)^2-(m_q^\star-m_i)^2=(m_q-m_q^\star)^2+2(m_q-m_q^\star)(m_q^\star-m_i).
 \end{align*}
 Since $m_p-m_p^\star=-(m_q-m_q^\star),$
 \begin{align*}
     &(m_p-m_i)^2-(m_p^\star-m_i)^2+(m_q-m_i)^2-(m_q^\star-m_i)^2\\
     &\hspace{1cm}=2(m_p-m_p^\star)^2+2(m_p-m_p^\star)(m_p^\star-m_q^\star)\\
     &\hspace{1cm}=[2+2(1-2\mu)/\mu](m_p-m_p^\star)^2=2(\frac{1}{\mu}-1)(m_p-m_p^\star)^2.
 \end{align*}
In particular,
\begin{align*}
    &Z(t)-Z(t+1)=\sum_{i,j\in [n]}[(m_i-m_j)^2-(m_i^\star-m_j^\star)^2]\\
    &\hspace{0.5cm}=2\bigg\{(m_p-m_q)^2-(m_p^\star-m_q^\star)^2\\
    &\hspace{1.3cm}+\sum_{i\in [n]-\{p,q\}}\big[(m_p-m_i)^2-(m_p^\star-m_i)^2+(m_q-m_i)^2-(m_q-m_i)^2\big]\bigg\}\\
    &\hspace{0.5cm}=2\big[4(\frac{1}{\mu}-1)+2(n-2)(\frac{1}{\mu}-1)\big](m_p-m_p^\star)^2=4n(\frac{1}{\mu}-1)(m_p-m_p^\star)^2.
\end{align*}
\end{proof}

\begin{lemma}\label{lemma:asymptotic stability}
Assume that $\sup_{t\geq 0}|\mu(t)-1/2|<1/2$. Then, all components of $\Tilde{G}$ are $\delta$-trivial after some finite time for all $\delta>0$ almost surely.   
\end{lemma}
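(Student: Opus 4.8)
The plan is to upgrade the monotonicity of $Z$ supplied by Lemma~\ref{key} into genuine convergence of $Z$, and then to convert the resulting vanishing of the per-step displacement into $\delta$-triviality through the explicit update rule \eqref{money transfer system}. First I would record a uniform control on $\mu$: set $\epsilon_0:=\frac12-\sup_{t\ge0}|\mu(t)-\frac12|$, which is strictly positive precisely by hypothesis, so that $\mu(t)\in[\epsilon_0,\,1-\epsilon_0]\subset(0,1)$ for every $t$; in particular $\mu(t)\neq0$ and $\frac{1}{\mu(t)}-1\ge\frac{\epsilon_0}{1-\epsilon_0}>0$ uniformly in $t$.

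Next, since $\mu(t)\neq0$, Lemma~\ref{key} gives $Z(t)-Z(t+1)=2n\bigl(\frac{1}{\mu(t)}-1\bigr)\sum_{i\in[n]}\bigl(m_i(t)-m_i(t+1)\bigr)^2\ge0$, so $(Z(t))_{t\ge0}$ is nonincreasing and, being bounded below by $0$, converges; hence $Z(t)-Z(t+1)\to0$. Because the prefactor $2n\bigl(\frac{1}{\mu(t)}-1\bigr)$ stays above the positive constant $2n\epsilon_0/(1-\epsilon_0)$, this forces
$$\sum_{i\in[n]}\bigl(m_i(t)-m_i(t+1)\bigr)^2\to0 .$$
Both halves of the bound on $\mu(t)$ enter at this step: the lower bound $\mu(t)\ge\epsilon_0$ will be used below, while the upper bound $\mu(t)\le1-\epsilon_0$ keeps the prefactor away from $0$, which is what makes the implication above valid.

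Finally, fix $\delta>0$ and pick a finite, trajectory-dependent time $T$ with $\sum_{i\in[n]}\bigl(m_i(t)-m_i(t+1)\bigr)^2<\epsilon_0^2\delta^2$ for every $t\ge T$. For such $t$, the set $\Tilde E(t)=U_t\cap E(t)$ is either empty or a single edge $\{(p,q)\}$ by the assumed form of the support $S$; in the nonempty case \eqref{money transfer system} gives $m_p(t)-m_p(t+1)=\mu(t)\bigl(m_p(t)-m_q(t)\bigr)$, whence
$$\bigl(m_p(t)-m_q(t)\bigr)^2=\frac{1}{\mu(t)^2}\bigl(m_p(t)-m_p(t+1)\bigr)^2\le\frac{1}{\epsilon_0^2}\sum_{i\in[n]}\bigl(m_i(t)-m_i(t+1)\bigr)^2<\delta^2 .$$
So the only possibly nontrivial component of $\Tilde G(t)$ joins two agents whose money differs by less than $\delta$, and every other component is a single vertex; hence all components of $\Tilde G(t)$ are $\delta$-trivial for all $t\ge T$. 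Since the only probabilistic input is the a.s.-valid hypothesis on $(\mu(t))$ together with the (random) choice of $T$, the conclusion holds almost surely, and $\delta>0$ was arbitrary.

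I do not anticipate a genuine obstacle: the mechanism is "monotone and bounded below $\Rightarrow$ summable increments," followed by an algebraic identity. The one point worth flagging is that the conclusion really needs $\mu(t)$ bounded away from $1$ as well as from $0$, and it is the strict inequality $\sup_{t\ge0}|\mu(t)-\frac12|<\frac12$ — not merely $\mu(t)\in(0,1)$ for each $t$ — that delivers the uniform constant $\epsilon_0$; a non-strict hypothesis would leave the prefactor $\frac{1}{\mu(t)}-1$ possibly tending to $0$ and break the argument.
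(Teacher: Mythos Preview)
Your argument is correct and follows essentially the same route as the paper: both use Lemma~\ref{key} together with the uniform bound $\mu(t)\in[\epsilon_0,1-\epsilon_0]$ (the paper writes $[1/2-R,\,1/2+R]$) to force the per-step drop in $Z$ to control $|m_p(t)-m_q(t)|$ on the active edge. The only cosmetic difference is that the paper phrases the endgame as a contradiction via summability of the telescoping series $\sum_t\bigl(Z(t)-Z(t+1)\bigr)\le Z(0)<\infty$, whereas you argue directly that $Z(t)-Z(t+1)\to0$; the two are equivalent here.
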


\begin{proof}
    Assume by contradiction that this is not the case. Then, there is $(t_k)_{k\geq 0}$ increasing with $\Tilde{G}(t_k)$ $\delta$-nontrivial for all $k\geq 0$. $R=\sup_{t\geq 0}|\mu(t)-1/2|$ implies $1/2-R \leq \mu(t) \leq 1/2+R$, therefore $1/\mu(t)-1\geq \frac{1/2-R}{1/2+R}$ for all $t\geq 0$. It follows from Lemma~\ref{key} that $$Z(0)\geq Z(0)-Z(s)=\sum_{t=0}^{s-1}[Z(t)-Z(t+1)]\ \hbox{for all}\ s\geq 1.$$
    Letting $s\to\infty,$
    $$\infty>Z(0)\geq\sum_{t\geq 0}[Z(t)-Z(t+1)]>\sum_{k\geq 0}4n\frac{(1/2-R)^3}{1/2+R}\delta^2=\infty,\ \hbox{a contradiction.}$$
\end{proof}

\begin{proof}[\bf Proof of Theorem~\ref{thm: equal wealth}]
    Due to finiteness of the social graph, it connected infinitely many times implies there is a graph $H$ connected infinitely many times, saying $(t_k)_{k\geq 0}$ the timepoints. For $(i,j)\in E(H)$, let $B_t$ be the event that $(i,j)\in U_t$. Because of $\bigcup_{a\in S}a\supset\binom{[n]}{2}$ and support $S$ finite, via second Borel–Cantelli lemma, $$\sum_{k\geq 0}P(B_{t_k})\geq\sum_{k\geq 0}\min_{a\in S}P(U_{t_k}=a)\geq\sum_{k\geq 0}\min_{a\in S}P(U_0=a)=\infty$$ implies $B_{s_\ell},\ \ell\geq 0$ hold for $(s_{\ell})_{\ell\geq 0}\subset (t_k)_{k\geq 0}$, therefore $(i,j)\in \Tilde{E}(s_\ell).$ Via Lemma~\ref{lemma:asymptotic stability}, vertices $i$ and $j$ have the same money eventually. Graph $H$ connected implies all vertices achieve the same money.
\end{proof}

\begin{proof}[\bf Proof of Theorem~\ref{thm: order}]
     We claim that $\lim_{t\to \infty}|\mu(t)-1/2|=1/2$. Let $A_t=\sup\big\{|m_i(t)-m_j(t)|:i,j\in [n],\ -d_i<m_i(t)\ \hbox{and}\ -d_j<m_j(t)\big\}.$ For $c=0$ or $-\infty$, $A_t=c$ implies $A_s=c$ for some $t$ and for all $s\geq t.$ It is clear that $P(A_0=c)=0$. If $A_t>0$ for all $t\geq 0,$ then $A_t$ is nondecreasing with respect to $t$. Assume that $\limsup_{t\to\infty}|\mu(t)-1/2|>1/2$. Then, there are $\delta>0$ and $(t_k)_{k\geq 0}$ increasing such that $|\mu(t_k)-1/2|\geq 1/2+\delta$ for all $k\geq 0$. Let $i_t=\arg\max_{i}\{m_i(t):i\in [n]\ \hbox{and}\ -d_i<m_i(t)\}$, $j_t=\arg\min_{i}\{m_i(t):i\in [n]\ \hbox{and}\ -d_i<m_i(t)\}$ and $E_t$ be the event that agents $i_{t_0}$ and $j_{t_0}$ transact at time $t$. Due to finite support $S$, $\bigcup_{a\in S}a\supset \binom{[n]}{2}$ and the second Borel-Cantelli lemma, we have $\sum_{k\geq 0}P(E_{t_k})=\infty$, which implies $E_t$ holds for infinitely many $t\in (t_k)_{k\geq 0}$. Say $\ell_1\in (t_k)_{k\geq 0}$ is the earliest time that $E_{\ell_1}$ holds. Then, $A_{\ell_1+1}\geq |1-2\mu(\ell_1)|A_{\ell_1} \geq (1+2\delta)A_{\ell_1}$. Repeating the above process, there is $(\ell_k)_{k\geq 1}\subset (t_k)_{k\geq 0}$ increasing such that $A_{\ell_k+1}\geq (1+2\delta)^k A_{\ell_1}\to\infty$ as $k\to\infty$, a contradiction. This completes the proof.
\end{proof}

\section*{Acknowledgment}
The author is funded by the National Science and Technology Council in Taiwan.

\end{document}